\documentclass[11pt]{article}
\evensidemargin=-0.in \oddsidemargin=-0.in \textwidth=5.9in
\topmargin= -0.2in \headheight=0.in \headsep=0.1in
\textheight=8.1in
\linespread{1.4}

\usepackage{color}
\usepackage{epsfig}
\usepackage{amsmath}
\usepackage{amsfonts}
\usepackage{amssymb}
\usepackage{amsthm}
\usepackage{graphicx}
\usepackage[pdfborder={0 0 0},linktocpage=true]{hyperref}

\newtheorem{lemma}{Lemma}
\newtheorem{theorem}{Theorem}

\newtheorem{corol}{Corollary}

\newtheorem{condition}{Condition}

\newcommand{\Real}{{\mathbb{R}}^}
\newcommand{\Reals}{\mathbb{R}}
\newcommand{\tgi}{\lim_{t\rightarrow \infty}}

\newcommand{\mE}{{\mathcal E}_}

\newcommand{\diam}{{\rm diam}}
\newcommand{\dist}{{\rm dist}}

%==========
\title{\LARGE Linear Matrix Inequalities for Ultimate Boundedness of Dynamical Systems with Conic Uncertain/Nonlinear Terms}
\author{\Large Beh\c cet\ A\c c\i kme\c se \\
\small Department of Aerospace Engineering and Engineering Mechanics \\
\small The University of Texas at Austin, USA}

\begin{document}
\maketitle
\begin{abstract}
This note introduces a sufficient   Linear Matrix Inequality (LMI)  condition for the ultimate boundedness of a class of continuous-time dynamical systems with conic uncertain/nonlinear terms.  
\end{abstract}

\section{Introduction}
This note introduces an LMI \cite{LMIbalakbk} result for the ultimate boundedness of dynamical systems with conic uncertain/nonlinear terms. Earlier research developed necessary and sufficient conditions for quadratic stability for systems with similar characterizations for uncertainties and nonlinearities \cite{behcet_qs_scl,balak_cdc94,balakash,yakup_kyp,rantzer_kyp,sweithesis}.  Incremental version of these characterizations are used  in the synthesis of nonlinear observers \cite{arcak01,arcakobs_aut_01,arcakCobs_scl_01,mythesis,dani2012observer,behcet_05observer,behcet_11observers} and to design robust Model Predictive Control (MPC) algorithms \cite{accikmese2006nonlinear,carson2006model,behcet_mpc11,carson2013robust}. The following results first appeared in \cite{behcet_mpc11}.

\paragraph*{Notation:}
%=====================
%
The following is a partial list of notation used in this paper:
$Q\!=\!Q^T >(\geq) 0$ implies $Q$ is a positive-(semi-)definite matrix; $\,
Co\{G_1,\ldots,G_N\}$ represents the convex hull of matrices $G_1
,\ldots, G_N$; 
%$\Real n$ is the space of $n$ dimensional vectors
%with real components; $\Reals$ is the set of real numbers; 
%$\Real n$ is the $n$ dimensional real vector space;
 $\mathbb{Z}^{+}$ is the set of non-negative
integers; $\|v\|$ is the 2-norm of the vector $v$;\; ${\lambda}_{max} (P)$ and ${\lambda}_{min}(P)$ are maximum and minimum eigenvalues of symmetric matrix $P$;  $\mE P \!:=\! \{ x : \, x^T P x \! \leq \! 1\}$ is an ellipsoid (possibly not bounded) defined by $P=P^T \geq 0$; for a bounded signal $w(\cdot)$, $\| w\|_{[t_1,t_2]} := \sup_{\tau \in [t_1,t_2]} \|w(\tau)\| $; 
%a function is said to be of class  $\mathcal{C}^r$ if its derivatives up to order $r$ exist and continuous; 
for a compact set $\Omega$,  $\diam (\Omega) := \max_{x,y \in \Omega} \| x-y\|$ and $\dist(a,\Omega) := \min_{x \in \Omega} \| a-x \| $; and, for $V: \Real N \! \to\! \Reals$, $\nabla V = [\,
\partial V / \partial {x_1} \, \ldots \, \partial V / \partial {x_n}\,]$.
A set $\Omega$ is said to be {\em invariant} over $[t_0,\infty)$ for $\dot{x}=f(x,t)$ if: $x(t_0) \in \Omega$ implies that $x(t) \in \Omega$, $\forall \, t \geq t_0$. $\Omega$ is also {\em attractive} if for every $x(t_0)$, ${\displaystyle \tgi \dist(x(t),\Omega)=0}$. 

\section{A General Analysis Result on Ultimate Boundedness} 
The following lemma gives a Lyapunov characterization for the ultimate boundedness of a nonlinear time-varying system, which is used in the proof of main result.
%--
\begin{lemma} \label{lm:iae}
Consider a  system  with state $\eta$ and  input
 $\sigma$ described by
\begin{equation} \label{eq:iosyshalf}
\dot{\eta} = \phi(t,\eta,\sigma)\,, \qquad  t \geq t_0.
\end{equation}
 Suppose there exists  a positive definite  symmetric matrix
 $P$ with, $V(\eta) = \eta^T P \eta$, and a continuous function $W$ such that
 for all $\eta$, $\sigma$ and $t \geq 0$
\begin{equation} \label{eq:suffiae}
\dot{V} =  2 \eta ^T P \phi(t,\eta,\sigma) \leq - W(\eta) < 0 \qquad \mbox{when} \qquad \eta^T
P \eta > \| \sigma \|^2 \, .
\end{equation}
 Then for every bounded continuous input signal $\,
\sigma(\cdot) \,$, the ellipsoid
 $
{\cal E} :=
\{ \eta
: \eta^T P \eta \leq \|\sigma(\cdot) \|_{[t_0,\infty)}^2 \} 
$
 is invariant and attractive  for   system
(\ref{eq:iosyshalf}).
Furthermore, for any solution $\eta(\cdot)$  we have
\begin{equation} \label{eq:limsupV}
\limsup_{t \rightarrow \infty} \left[ \eta(t)^T P \eta(t) \right] \leq
\|\sigma(\cdot) \|_{[t_0,\infty)}^2 \, .
\end{equation}
\end{lemma}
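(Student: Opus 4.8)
The plan is to set $r := \|\sigma(\cdot)\|_{[t_0,\infty)}^2$, so that ${\cal E} = \{\eta : \eta^T P \eta \le r\}$ and $\|\sigma(\tau)\|^2 \le r$ for every $\tau \ge t_0$. The single structural fact driving everything is that along any solution, whenever $V(\eta(t)) > r$ we automatically have $V(\eta(t)) > r \ge \|\sigma(t)\|^2$, so hypothesis (\ref{eq:suffiae}) applies and yields $\dot V(\eta(t)) \le -W(\eta(t)) < 0$. Thus $V$ is strictly decreasing along any trajectory on the open region where it exceeds $r$. Since $P = P^T > 0$, every sublevel set $\{\eta : \eta^T P \eta \le c\}$ is compact, a fact I will use repeatedly; this also rules out finite escape time once $V$ is bounded.

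First I would prove invariance of ${\cal E}$ by a boundary/contradiction argument. Suppose $\eta(t_0) \in {\cal E}$ but $V(\eta(t_1)) > r$ for some $t_1 > t_0$. Let $t_a := \sup\{t \in [t_0,t_1] : V(\eta(t)) \le r\}$; by continuity $V(\eta(t_a)) = r$ and $V(\eta(t)) > r$ for all $t \in (t_a, t_1]$. On this interval the structural fact gives $\dot V < 0$, so $V(\eta(t_1)) < V(\eta(t_a)) = r$, contradicting $V(\eta(t_1)) > r$. Hence $V(\eta(t)) \le r$ for all $t \ge t_0$, i.e.\ ${\cal E}$ is invariant. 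The identical argument, with $r$ replaced by $r + \epsilon$ for any $\epsilon > 0$, shows that each ellipsoid ${\cal E}_\epsilon := \{\eta : \eta^T P \eta \le r + \epsilon\}$ is invariant as well.

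Next I would establish the $\limsup$ bound (\ref{eq:limsupV}), from which attractivity follows. Fix $\epsilon > 0$ and take an arbitrary solution with $V(\eta(t_0))$ finite. If the solution never enters ${\cal E}_\epsilon$, then $V(\eta(t)) > r + \epsilon \ge \|\sigma(t)\|^2$ for all $t$, so $V$ is decreasing and the trajectory remains in the compact annulus $K := \{\eta : r + \epsilon \le \eta^T P \eta \le V(\eta(t_0))\}$. The function $W$ is continuous and, by (\ref{eq:suffiae}) applied with $\sigma = 0$, strictly positive wherever $\eta^T P \eta > 0$; hence it attains a minimum $m > 0$ on $K$. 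Then $\dot V(\eta(t)) \le -m$, so $V(\eta(t)) \le V(\eta(t_0)) - m(t - t_0) \to -\infty$, contradicting $V \ge 0$. Therefore the solution enters ${\cal E}_\epsilon$ at some finite time, and by invariance of ${\cal E}_\epsilon$ it stays there, so $V(\eta(t)) \le r + \epsilon$ for all large $t$. Letting $\epsilon \downarrow 0$ gives $\limsup_{t\to\infty} V(\eta(t)) \le r$, which is (\ref{eq:limsupV}).

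Finally, attractivity follows by converting the $\limsup$ bound into a distance statement: since $\eta \mapsto \dist(\eta, {\cal E})$ is continuous and vanishes on ${\cal E}$, its maximum $d(\epsilon)$ over the compact annulus $\{\eta : r \le \eta^T P \eta \le r + \epsilon\}$ tends to $0$ as $\epsilon \downarrow 0$; combined with $V(\eta(t)) \le r + \epsilon$ eventually, this forces $\dist(\eta(t), {\cal E}) \to 0$. I expect the main obstacle to be this attractivity / finite-time-entry step: the strict negativity encoded by the continuous positive $W$ must be converted into a \emph{uniform} decay rate, and this is precisely where compactness of the ellipsoidal sublevel sets (guaranteed by $P > 0$) and the extraction of the minimum $m > 0$ of $W$ on the annulus are essential. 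The invariance step, by contrast, is a routine continuity argument.
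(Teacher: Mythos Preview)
The paper does not actually prove this lemma in the text; it simply cites an external reference (\cite{behcet_ijc03}) for the proof, so there is no in-paper argument to compare against. Your proposal is correct and follows the standard Lyapunov route to ultimate boundedness---invariance via a boundary contradiction, finite-time entry into $\mathcal{E}_\epsilon$ by extracting a positive lower bound $m$ for $W$ on the compact annulus, and then letting $\epsilon\downarrow 0$---and would serve as a self-contained replacement for the citation.
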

See \cite{behcet_ijc03} for a proof of the above lemma.
%---

%---
\section{Analysis of Systems with Conic Uncertainty/\\Nonlinearity}
In this section we consider the following system
\begin{equation} \label{eq:Conesys}
\dot{x} = A x + E p(t,x) + G w 
\end{equation}
where $x$ is the state, $p$ represents the uncertain/nonlinear terms, and $w$ is a bounded disturbance signal, and $p \in \mathcal{F}(\mathcal{M})$ with
\begin{equation} \label{eq:Coneq}
q = C_q x + D p.
\end{equation} 
To define $p \in \mathcal{F}(\mathcal{M})$,  let
\begin{equation} \label{eq:Fset}
	\mathcal{F}  (\mathcal{M}):= \left\{ \phi: \Real {n_q\!+\!1} \! \to \! \Real {n_p}  : \  \mbox{$ \phi$ satisfies QI (\ref{eq:QI})} \right\}.
\end{equation}
where 
the following QI ({\em Quadratic Inequality})  is satisfied
\begin{equation} \label{eq:QI}
	\left[\!\! \begin{array}{c}
	q \\
	\phi(t,v)
	\end{array}\!\! \right]^T  M \left[\! \! \begin{array}{c}
	q \\
	\phi(t,v)
	\end{array}\!\! \right] \geq 0,  \qquad \forall \, M \in \mathcal{M}, \ \ \forall \, v \in \Real {n_q},  \ \mbox{and $ \forall \, t.$}
\end{equation}
where $\mathcal{M}$ is  a set of symmetric matrices.

The following condition, which is instrumental in the  control synthesis, is assumed to hold for the incrementally-conic uncertain/nonlinear terms. 
\begin{condition} \label{asm:Multip4syn}
	There exist a nonsingular matrix $T$ and a convex set $\mathcal{N}$ of matrix pairs $(X,Y)$ with $Y \in \Real {n_p \times n_p}$ and $X,Y$
	symmetric and nonsingular such that for each $(X,Y) \in \mathcal{N}$, the matrix
	\begin{equation} \label{eq:N4M}
		M=T^T \left[\begin{array}{cc}X^{-1} & 0 \\0 & -Y^{-1}\end{array}\right] T\in \mathcal{M} \quad \ \text{\rm with} \ \quad T =
		\left[\begin{array}{cc}T_{11} & T_{12} \\ T_{21} & T_{22}\end{array}\right],
	\end{equation}
	where $T_{22}+T_{21} D$ is nonsingular, $T_{21} \!\in\! \Real {n_p\times n_q}$ and $T_{22} \in \Real {n_p \times n_p}$. Furthermore,  the set $\mathcal{N}$ can be parameterized by a finite number of LMIs.
\end{condition}

It is also assumed that the set of multipliers $\mathcal{M}$ satisfies Condition \ref{asm:Multip4syn}. The following theorem, the main result of this note,  presents an LMI condition guaranteeing ultimate boundedness of all the trajectories of the system  (\ref{eq:Conesys}). 
%---
\begin{theorem} \label{lem:UltBoundCone}
	Consider the system given by (\ref{eq:Conesys}) with $p \in \mathcal{F}(\mathcal{M})$ where the multiplier set $\mathcal{M}$ satisfies
	Condition \ref{asm:Multip4syn}.  Suppose that there exist $Q=Q^T>0$, $(X,Y) \in \mathcal{N}$,  $\lambda>0$, and $R=R^T>0$ such that
	the following matrix inequality holds
	{\small
	\begin{equation} \label{eq:AnalysisLMI}
	\left[\begin{array}{cccc}
	(A\!-\! E \Gamma^{-1} T_{21} C_q)Q +Q(A\!-\! E \Gamma^{-1} T_{21} C_q)^T+\! \lambda Q+ \! R  & E \, \Gamma^{-1} Y &Q \, C_q^T \Sigma^T   & G \\
	Y \Gamma^{-T} E^T  & -Y & Y \Lambda^T &  0 \\
	\Sigma \, C_q \, Q  & \Lambda  Y& -X  & 0 \\
	G^T & 0 & 0 & -\lambda I \end{array}\right] \leq 0
	\end{equation}}
	where
	$$
	\Gamma = T_{21} D + T_{22}, \ \ \Lambda = (T_{11} D + T_{12}) \Gamma^{-1}, \ \
	\Sigma = T_{11}- (T_{11} D + T_{12} ) \Gamma^{-1} T_{21}.
	$$
	Then, letting $V(x):=x^{T} Q^{-1}x$, we have
	\begin{equation} \label{eq:VdotVgw}
	\dot{V} (x) + x^{T}Q^{-1} R Q^{-1} x  \leq 0 ,\qquad \forall \, V(x)\geq \|w\|^{2}.
	\end{equation}
\end{theorem}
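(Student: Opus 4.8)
The plan is to verify the dissipation inequality (\ref{eq:VdotVgw}) by an S-procedure argument that treats the quadratic inequality (\ref{eq:QI}) and the region $V(x)\ge\|w\|^2$ as two quadratic constraints, and then to recognize the resulting matrix inequality as a congruence transform of the LMI (\ref{eq:AnalysisLMI}). First I would set $P=Q^{-1}$, so that $V(x)=x^TPx$, and expand
$$\dot V = 2x^TP\dot x = 2x^TP(Ax+Ep+Gw).$$
The essential step is a change of variables that linearizes the multiplier structure: since $\Gamma=T_{21}D+T_{22}$ is nonsingular by Condition \ref{asm:Multip4syn}, I introduce $s:=T_{21}q+T_{22}p=T_{21}C_qx+\Gamma p$, so that $p=\Gamma^{-1}(s-T_{21}C_qx)$ and $Ep=E\Gamma^{-1}s-E\Gamma^{-1}T_{21}C_q x$. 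Writing $\tilde A:=A-E\Gamma^{-1}T_{21}C_q$ (the matrix appearing in the $(1,1)$ block of the LMI), this gives $\dot V=2x^TP\tilde A x+2x^TPE\Gamma^{-1}s+2x^TPGw$. With the same substitution the upper component $r:=T_{11}q+T_{12}p$ becomes $r=\Sigma C_qx+\Lambda s$, and the multiplier $M$ of Condition \ref{asm:Multip4syn} turns the QI into $r^TX^{-1}r-s^TY^{-1}s\ge0$.

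Next I would form the S-procedure aggregate
$$\Phi := \dot V + x^TPRPx + \big(r^TX^{-1}r-s^TY^{-1}s\big) + \lambda\,(x^TPx-w^Tw),$$
a quadratic form in $(x,s,w)$. Because the QI is nonnegative, $\lambda>0$, and $x^TPx-w^Tw\ge0$ precisely on the set $V(x)\ge\|w\|^2$, a pointwise bound $\Phi\le0$ immediately yields $\dot V+x^TPRPx=\Phi-(\text{QI})-\lambda(x^TPx-w^Tw)\le0$ on that set, which is exactly (\ref{eq:VdotVgw}) after noting $x^TPRPx=x^TQ^{-1}RQ^{-1}x$. It therefore remains to show that $\Phi\le0$ for all $(x,s,w)$ is implied by the LMI.

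Finally I would identify $\Phi=\zeta^T\mathcal{H}\zeta$ with $\zeta=[x^T\,s^T\,w^T]^T$ and handle the indefinite $X^{-1}$ term by a Schur complement: the $r^TX^{-1}r$ contribution is split off into a fourth block, producing a $4\times4$ matrix whose $(3,3)$ block is $-X$. A congruence by $\mathrm{diag}(Q,Y,I,I)$ (legitimate since $Q>0$ and $Y$ is nonsingular) then maps this matrix block-by-block onto (\ref{eq:AnalysisLMI}): the $(1,1)$ block becomes $\tilde A Q+Q\tilde A^T+\lambda Q+R$, the remaining entries become $E\Gamma^{-1}Y$, $QC_q^T\Sigma^T$, $G$, $Y\Lambda^T$, and the diagonal becomes $-Y,-X,-\lambda I$. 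Hence the LMI $\le0$ gives, via the inverse congruence and the Schur step (which requires $X>0$, itself forced by the $(3,3)$ block of the transformed inequality together with nonsingularity of $X$), $\mathcal{H}\le0$, completing the argument. I expect the main obstacle to be purely bookkeeping: propagating $p=\Gamma^{-1}(s-T_{21}C_qx)$ through both $\dot V$ and the QI and checking that the definitions of $\Gamma,\Lambda,\Sigma$ are exactly what make the Schur/congruence pair reproduce every block of (\ref{eq:AnalysisLMI}).
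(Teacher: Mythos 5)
Your proposal is correct and is essentially the paper's own proof run in reverse: the substitution $s=T_{21}q+T_{22}p$ (so that $p=\Gamma^{-1}(s-T_{21}C_qx)$ and $r=\Sigma C_q x+\Lambda s$) is precisely the congruence transformation the paper applies to expose the multiplier $T^T\,\mathrm{diag}(X^{-1},-Y^{-1})\,T\in\mathcal{M}$, and your Schur complement on the $X$-block, the $\mathrm{diag}(Q,Y,I,I)$ scaling, and the closing S-procedure with multiplier $\lambda$ coincide with the paper's steps. The only difference is the direction of presentation (building the quadratic form and matching it to the LMI, rather than massaging the LMI down to the quadratic form), which changes nothing substantive.
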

%---
\begin{proof}
	%
	%We  show that the system (\ref{eq:Conesys})   satisfies the condition (\ref{eq:suffiae}) with $V(x) = x^T Q^{-1} x$. 
	First pre- and post-multiply (\ref{eq:AnalysisLMI}) by
	\begin{equation*}
		\left[\begin{array}{cccc} I & 0 &0 &0 \\ 0 & I & 0 & 0\\
		0 & 0 & 0 & I\\0 & 0 & I & 0\end{array}\right] 
	\end{equation*}
	and then pre- and post-multiply the resulting matrix inequality with ${
	\rm diag} (Q^{-1},Y^{-1},I,I)$ to obtain
	{\small
	\begin{equation*}
		\left[\begin{array}{cccc}
		%(A\!-\! E \Gamma^{-1} T_{21} C_q)Q +Q(A\!-\! E \Gamma^{-1} T_{21} C_q)^T+\! \lambda Q+ \! R  & Q^{-1} E \, \Gamma^{-1}  
		\left( \begin{array}{c} Q^{-1} (A\!-\! E \Gamma^{-1} T_{21} C_q) +(A\!-\! E \Gamma^{-1} T_{21} C_q)^T Q^{-1} \\ +\, \lambda Q^{-1} + Q^{-1} RQ^{-1}
		\end{array}\right)& Q^{-1} E \Gamma^{-1}& Q^{-1} G    & C_q^T \Sigma^T \\
		\Gamma^{-T} E^T Q^{-1} & -Y^{-1} &  0&  \Lambda^T \\
		G^T Q^{-1}  & 0  & -\lambda I  & 0 \\
		\Sigma \, C_q  & \Lambda & 0 & -X \end{array}\right] \leq 0
	\end{equation*}
	}
	By using Schur complements the above inequality implies that %to show that (\ref{eq:AnalysisLMI}) implies the following inequality
	{\small
	\begin{equation*}
	\begin{split}
	&\left[\begin{array}{ccc}
	Q^{-1} (A\!-\! E \Gamma^{-1} T_{21} C_q) +(A\!-\! E \Gamma^{-1} T_{21} C_q)^T Q^{-1}+\! \lambda Q^{-1} + Q^{-1} RQ^{-1} & Q^{-1} E \, \Gamma^{-1}  & Q^{-1} G\\
	 \Gamma^{-T} E^T Q^{-1} & -Y^{-1} & 0 \\
	G^T Q^{-1} & 0 & -\lambda I   \end{array}\right] + \\
	&
	\left[\begin{array}{ccc}
	\Sigma C_q & \Lambda & 0 \end{array}\right]^T 
	X^{-1}
	\left[\begin{array}{ccc}
	\Sigma C_q & \Lambda & 0 \end{array}\right] \leq 0,
	\end{split}
	\end{equation*}}
	which then implies that
	{\small
	\begin{equation*}
	\begin{split}
	&\left[\begin{array}{ccc}
	Q^{-1} (A\!-\! E \Gamma^{-1} T_{21} C_q) +(A\!-\! E \Gamma^{-1} T_{21} C_q)^T Q^{-1}+\! \lambda Q^{-1} + Q^{-1} RQ^{-1} & Q^{-1} E \, \Gamma^{-1}  & Q^{-1} G\\
	 \Gamma^{-T} E^T Q^{-1} & 0 & 0 \\
	G^T Q^{-1} & 0 & -\lambda I   \end{array}\right] + \\
	&
	\left[\begin{array}{ccc}
	\Sigma C_q & \Lambda & 0 \\
	0 & I & 0\end{array}\right]^T 
	\left[\begin{array}{cc}
	X^{-1} & 0 \\
	0 & -Y^{-1} \end{array}\right]
	\left[\begin{array}{ccc}
	\Sigma C_q & \Lambda & 0 \\
	0 & I & 0\end{array}\right] \leq 0.
	\end{split}
	\end{equation*}}
	Note that
	$$
	\left[\begin{array}{ccc}
	\Sigma C_q & \Lambda & 0 \\
	0 & I & 0\end{array}\right] =
	\left[\begin{array}{cc}
	\Sigma & \Lambda \\
	0 & I\end{array}\right]
	\left[\begin{array}{ccc}
	 C_q & 0 & 0 \\
	0 & I & 0\end{array}\right].
	$$
	Now post- and pre-multiply the earlier matrix inequality with the following matrix and its transpose
	{\small
	$$
	\left[\begin{array}{ccc}
	I & 0 & 0 \\
	T_{21} C_q  & \Gamma & 0 \\
	0 & 0 & I\end{array}\right]
	$$}
	to obtain
	{\small
	\begin{equation*}
	\begin{split}
	&\left[\begin{array}{ccc}
	Q^{-1} A\!+A^T Q^{-1}+  \! \lambda Q^{-1}\! +\! Q^{-1} R Q^{-1}& Q^{-1} E & Q^{-1} G\\
	  E^T Q^{-1} & 0 & 0 \\
	G^T Q^{-1} & 0 & -\lambda I   \end{array}\right] + \\
	&
	\left[\begin{array}{ccc}
	 C_q & 0 & 0 \\
	T_{21} C_q & \Gamma & 0\end{array}\right]^T
	\left[\begin{array}{cc}
	\Sigma & \Lambda \\
	0 & I\end{array}\right]^T 
	\left[\begin{array}{cc}
	X^{-1} & 0 \\
	0 & -Y^{-1} \end{array}\right]
	\left[\begin{array}{cc}
	\Sigma & \Lambda \\
	0 & I\end{array}\right]
	\left[\begin{array}{ccc}
	 C_q & 0 & 0 \\
	T_{21} C_q & \Gamma & 0\end{array}\right] \leq 0,
	\end{split}
	\end{equation*}}
	where
	{\small
	$$
	\left[\begin{array}{cc}
	\Sigma & \Lambda \\
	0 & I\end{array}\right]
	\left[\begin{array}{ccc}
	 C_q & 0 & 0 \\
	T_{21} C_q & \Gamma & 0\end{array}\right] = \left[\begin{array}{ccc}
	T_{11} C_q & T_{11}D + T_{12} & 0 \\
	T_{21} C_q & T_{21}D +T_{22} & 0\end{array}\right]= T \left[\begin{array}{ccc}
	C_q & D & 0 \\
	0 & I & 0\end{array}\right].
	$$}
	By using Condition \ref{asm:Multip4syn},
	$$
	M=T^T \left[\begin{array}{cc}
	X^{-1} & 0 \\
	0 & -Y^{-1} \end{array}\right] T \in \mathcal{M}
	$$
	 This implies that, for some $M \in \mathcal{M}$, we have
	 {\small
	$$
	\left[\begin{array}{ccc}
	Q^{-1} A\!+A^T Q^{-1}+ \! \lambda Q^{-1} \!+\! Q^{-1} R Q^{-1} & Q^{-1} E & Q^{-1} G\\
	  E^T Q^{-1} & 0 & 0 \\
	G^T Q^{-1} & 0 & -\lambda I   \end{array}\right] + 
	\left[\begin{array}{ccc}
	C_q & D & 0 \\
	0 & I & 0\end{array}\right]^T M \left[\begin{array}{ccc}
	C_q & D & 0 \\
	0 & I & 0\end{array}\right] \leq 0.
	$$}
	Pre- and post-multiplying the above inequality with $[x^T \ p^T \ w^T]$ and its transpose and using $V=x^T Q^{-1}x$, we obtain
	{\small
	$$
	2x^T Q^{-1} (Ax+Ep+Gw) +x^T Q^{-1}RQ^{-1} x  +\lambda(V-\|w\|^2) +
	\left[\begin{array}{c}q \\p\end{array}\right]^T M \left[\begin{array}{c}q \\p\end{array}\right] \leq 0, \quad \mbox{for all} \ \ \, \left[\begin{array}{c} x \\ p \\ w\end{array}\right].
	$$}
	Since $p \in \mathcal{F}(\mathcal{M})$ with $q=Cx+Dp$, by using the S-procedure \cite{LMIbalakbk}, the above inequality implies that the system (\ref{eq:Conesys}) satisfies:
	$
	\dot{V} \leq - x^T Q^{-1}RQ^{-1} x  < 0, \ \ \ \forall \, V \geq \|w\|^2.
	$
\end{proof}
%---
%FOLLOWING part is useful for convergence to origin when disturbances go to zero
%--------------------------------------------------------------------------------
%Now by using Lemma \ref{lm:iae}, we conclude that the set $\Omega(t_0)$ is an invariant and attractive ellipsoid for the system (\ref{eq:Conesys}).

%Now suppose that the signal $w$ is not only bounded but also $\tgi \|w(t)\|=0$.  Consequently for any $\delta >0$ there exists some $t_1>0$ such that $\|w(t) \| < \delta$ for all $t \geq t_1$. Since $\Omega(t_0)$ is invariant and attractive set for the system (\ref{eq:Conesys}), any trajectory of this system is bounded, hence $x(t_1)$ is finite.  Now consider the trajectory of  the system (\ref{eq:Conesys}) for $t \geq t_1$.  By using Lemma \ref{lm:iae} with $t_1$ instead of $t_0$, we can easily conclude that $\Omega(t_1)$ is also an invariant and attractive set for  the system (\ref{eq:Conesys}).  This implies that there exists some time $t_2>t_1$ such that ${\rm d} (x(t),\Omega(t_1)) < \delta$ for all $t > t_2$. Note that $y \in \Omega(t_1)$  implies that $\|y\| \leq \delta/\sqrt{\lambda_{max}(Q)}$.  Hence, for any $t \geq t_2$, $\|x(t)-y\| < \delta$ for all $y \in \Omega(t_1)$, which implies that
%$$
%\|x(t)\|\leq \|x(t)-y\|+\|y\| < \delta \left(1+ \frac{1}{\sqrt{\lambda_{max}(q)}}\right) , \qquad \forall \, t > t_2.
%$$
%Since $\delta >0$ can be chosen arbitrarily small, this implies that $\tgi \|x(t) \|=0$, which completes the proof.
%
The following corollary gives a matrix inequality condition for the quadratic stability of the system (\ref{eq:Conesys}) (when $w=0$), that is, existence of a quadratic Lyapunov function $V = x^T P x$ proving the exponential stability by establishing
\begin{equation} \label{eq:qsCond}
\dot{V} + x^T Q^{-1}RQ^{-1} x  \leq 0 
\end{equation}
for all trajectories of the system (\ref{eq:Conesys}). The proof of the lemma follows  from a straight adaption of the proof of Theorem \ref{lem:UltBoundCone}.
%--
\begin{corol} \label{lem:qs_analysis}
Consider the system given by (\ref{eq:Conesys}) with $w \equiv 0$ and $p \in \mathcal{F}(\mathcal{M})$ where the multiplier set $\mathcal{M}$ satisfies Condition \ref{asm:Multip4syn}.   Suppose that there exist $Q=Q^T>0$, $(X,Y) \in \mathcal{N}$ and  $\lambda >0$ such that the following matrix inequality holds
{\small
\begin{equation} \label{eq:AnalysisLMI_2}
\left[\begin{array}{ccc}
(A\!-\! E \Gamma^{-1} T_{21} C_q)Q +Q(A\!-\! E \Gamma^{-1} T_{21} C_q)^T \! + \! R \! & E \, \Gamma^{-1} Y & Q \, C_q^T \Sigma^T \\
Y\Gamma^{-T}E^T& -Y & Y \Lambda^T \\
\Sigma \, C_q \, Q& \Lambda Y & -X \end{array}\right] \leq 0
\end{equation}}
where $\Gamma, \ \Sigma, \ \Lambda$ are as given in Theorem \ref{lem:UltBoundCone}.
Then the system (\ref{eq:Conesys})  is quadratically stable with a Lyapunov function $V= x^T Q^{-1} x$ and all the trajectories satisfy
\begin{eqnarray} \label{eq:qs->bound}
V(x(t)) &\leq& V(x(t_0)) , \qquad \forall \, t \geq t_0, \\ \label{eq:qs->Opt}
\dot{V}(x) + x^T Q^{-1}RQ^{-1} x &\leq & 0  , \qquad \forall \, x.
\end{eqnarray}
\end{corol}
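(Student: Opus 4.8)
The plan is to transcribe, almost verbatim, the sequence of congruence transformations and Schur-complement reductions from the proof of Theorem~\ref{lem:UltBoundCone}, now specialized to $w\equiv 0$. The guiding observation is that the $3\times 3$ inequality (\ref{eq:AnalysisLMI_2}) is precisely what (\ref{eq:AnalysisLMI}) becomes once the disturbance row and column (the block carrying $G$ and $-\lambda I$) are deleted and the regularizing term $\lambda Q$ is removed from the $(1,1)$ entry. Because the disturbance is absent, neither ingredient is needed; in particular $\lambda$ does not occur in (\ref{eq:AnalysisLMI_2}) and is inert throughout the argument.

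First I would pre- and post-multiply (\ref{eq:AnalysisLMI_2}) by $\mathrm{diag}(Q^{-1},Y^{-1},I)$ and eliminate the $-X$ block by a Schur complement, exactly as in Theorem~\ref{lem:UltBoundCone}; this yields a $2\times 2$ inequality in the state and $p$ blocks whose constant part is $Q^{-1}(A-E\Gamma^{-1}T_{21}C_q)+(A-E\Gamma^{-1}T_{21}C_q)^TQ^{-1}+Q^{-1}RQ^{-1}$ in the $(1,1)$ slot, together with a quadratic form built from $\mathrm{diag}(X^{-1},-Y^{-1})$. I would then factor that quadratic form through $\left[\begin{smallmatrix}\Sigma & \Lambda\\ 0 & I\end{smallmatrix}\right]\!\left[\begin{smallmatrix}C_q & 0\\ 0 & I\end{smallmatrix}\right]$, apply the congruence by $\left[\begin{smallmatrix}I & 0\\ T_{21}C_q & \Gamma\end{smallmatrix}\right]$ and its transpose, and use the identity $\left[\begin{smallmatrix}\Sigma & \Lambda\\ 0 & I\end{smallmatrix}\right]\!\left[\begin{smallmatrix}C_q & 0\\ T_{21}C_q & \Gamma\end{smallmatrix}\right]=T\!\left[\begin{smallmatrix}C_q & D\\ 0 & I\end{smallmatrix}\right]$ to pull out $T$. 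Condition~\ref{asm:Multip4syn} then identifies $M=T^T\mathrm{diag}(X^{-1},-Y^{-1})T\in\mathcal{M}$, giving
\[
\begin{bmatrix}Q^{-1}A+A^TQ^{-1}+Q^{-1}RQ^{-1} & Q^{-1}E\\ E^TQ^{-1} & 0\end{bmatrix} + \begin{bmatrix}C_q & D\\ 0 & I\end{bmatrix}^T M\begin{bmatrix}C_q & D\\ 0 & I\end{bmatrix}\le 0 .
\]

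To conclude I would pre- and post-multiply this inequality by $[\,x^T\ p^T\,]$ and its transpose. Writing $V=x^TQ^{-1}x$ and using $\dot x=Ax+Ep$, this reads $2x^TQ^{-1}(Ax+Ep)+x^TQ^{-1}RQ^{-1}x+[\,q^T\ p^T\,]M[\,q^T\ p^T\,]^T\le 0$. Since $p\in\mathcal{F}(\mathcal{M})$ satisfies the QI (\ref{eq:QI}) with $q=C_qx+Dp$ and $M\in\mathcal{M}$, the last term is nonnegative and may be discarded by the S-procedure, leaving $\dot V+x^TQ^{-1}RQ^{-1}x\le 0$, which is (\ref{eq:qs->Opt}). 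As $R=R^T>0$ and $Q=Q^T>0$, there is $\alpha:=\lambda_{min}(Q^{-1/2}RQ^{-1/2})>0$ with $x^TQ^{-1}RQ^{-1}x\ge\alpha V$, so $\dot V\le-\alpha V<0$ for $x\neq 0$; this gives quadratic (exponential) stability and, since $V$ is then nonincreasing along trajectories, the bound (\ref{eq:qs->bound}).

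I expect no real obstacle: the proof is a transcription of that of Theorem~\ref{lem:UltBoundCone} with the disturbance channel deleted. The only delicate points are bookkeeping — verifying that the Schur-complement and congruence operations touch only the $p$- and $q$-blocks, and so are unaffected by removing the $w$-row/column and the $\lambda Q$ term — and noting that here it is the assumption $R>0$, rather than the margin $V\ge\|w\|^2$ used in the theorem, that upgrades the nonpositivity (\ref{eq:qs->Opt}) into the strict decay required for quadratic stability.
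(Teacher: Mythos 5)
Your proposal is correct and follows exactly the route the paper intends: the paper gives no separate argument for the corollary, stating only that it is "a straight adaptation of the proof of Theorem \ref{lem:UltBoundCone}," and your step-by-step transcription (congruence by ${\rm diag}(Q^{-1},Y^{-1},I)$, Schur complement on $-X$, the congruence by $\left[\begin{smallmatrix}I & 0\\ T_{21}C_q & \Gamma\end{smallmatrix}\right]$, the factorization through $T$, and the S-procedure) is precisely that adaptation with the $w$-channel and the $\lambda Q$ term removed. Your closing observation that $R>0$ supplies the strict decay needed for quadratic stability, and that the hypothesis $\lambda>0$ is inert here, is a correct and worthwhile clarification of a point the paper leaves implicit.
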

%---------

%
\bibliographystyle{unsrt}
\bibliography{lmi-bnded}
%====
\end{document}